
\documentclass[sn-mathphys,Numbered]{sn-jnl}

\usepackage{graphicx}
\usepackage[latin1]{inputenc}
\usepackage{amssymb,amsmath,array}
\usepackage{adjustbox}
\usepackage{amsthm}
\usepackage{url}
\usepackage{orcidlink}
\usepackage{mathtools}
\usepackage[algo2e, linesnumbered, ruled, vlined]{algorithm2e}
\usepackage{algorithm}
\usepackage{algorithmic}
\DeclarePairedDelimiter\ceil{\lceil}{\rceil}
\usepackage{multirow}%
\usepackage{amsthm}%
\usepackage{mathrsfs}%
\usepackage[title]{appendix}%
\usepackage{xcolor}%
\usepackage{textcomp}%
\usepackage{amssymb}
\usepackage{manyfoot}%
\usepackage{booktabs}%
\usepackage[algo2e, ruled, vlined]{algorithm2e}
\usepackage{algorithm}
\usepackage{algorithmic}
\usepackage{listings}
\usepackage{tikz}%
\usetikzlibrary{quantikz}

\newtheorem{theorem}{Theorem}
\newtheorem{proposition}[theorem]{Proposition}

\raggedbottom

\begin{document}

\title[Quantum Subroutine for Variance Estimation: Algorithmic Design and  Applications]{Quantum Subroutine for Variance Estimation: Algorithmic Design and Applications}

\author[1]{\fnm{Anna} \sur{Bernasconi}}\email{anna.bernasconi@unipi.it}

\author[1]{\fnm{Alessandro} \sur{Berti}}\email{alessandro.berti@phd.unipi.it}

\author[1]{\fnm{Gianna M.} \sur{Del Corso}}\email{gianna.delcorso@unipi.it}
\author[1]{\fnm{Riccardo} \sur{Guidotti}}\email{riccardo.guidotti@unipi.it}
\author*[1]{\fnm{Alessandro} \sur{Poggiali}}\email{alessandro.poggiali@phd.unipi.it}
\affil*[1]{\orgdiv{Department of Computer Science}, \orgname{University of Pisa}, \orgaddress{\street{Largo Pontecorvo, 3}, \city{Pisa}, \postcode{56127}, \country{Italy}}}

\abstract{
Quantum computing sets the foundation for new ways of designing algorithms, thanks to the peculiar properties inherited by quantum mechanics. The exploration of this new paradigm faces new challenges concerning which field quantum speedup can be achieved. Towards finding solutions, looking for the design of quantum subroutines that are more efficient than their classical counterpart poses solid pillars to new powerful quantum algorithms.
Herewith, we delve into a grounding subroutine, the computation of the variance, whose usefulness spaces across different fields of application, particularly the Artificial Intelligence (AI) one. Indeed, the finding of the quantum counterpart of these building blocks impacts vertically those algorithms that leverage this metric.
In this work, we propose QVAR, a quantum subroutine, to compute the variance that exhibits a logarithmic complexity both in the circuit depth and width, excluding the state preparation cost. With the vision of showing the use of QVAR as a subroutine for new quantum algorithms, we tackle two tasks from the AI domain: Feature Selection and Outlier Detection. In particular, we showcase two AI hybrid quantum algorithms that leverage QVAR: the Hybrid Quantum Feature Selection (HQFS) algorithm and the Quantum Outlier Detection Algorithm (QODA).
In this manuscript, we describe the implementation of QVAR, HQFS, and QODA,  providing their correctness and complexities and showing the effectiveness of these hybrid quantum algorithms with respect to their classical counterpart.}

\keywords{Quantum Computing, Variance, Feature Selection, Outlier Detection}

\maketitle

\section{Introduction}\label{sec1}
The emergence of the Quantum Computation paradigm can be accredited to the prospect of discovering computational speed-up with respect to classical computation.
Quantum Computation introduced novel challenges and new ways to rethink algorithms. Within this emerging computational paradigm, among all the challenges that arose, a pivotal research question deals with the design and implementation of quantum versions of widespread subroutines employed in the classical domain to the quantum one. In particular, identifying and understanding speed-up opportunities for these classical grounding subroutines in their quantum counterpart is crucial for the development of advanced quantum algorithms built upon these foundational components~\cite{nielsen2016quantum,lloyd2013quantum}. 

Among those components, our interest fell on the concept of \textit{variance}, a pervasive measure that finds application across a vast number of domains~\cite{7160458}, including the field of Artificial Intelligence (AI). Although the variance can be computed classically with linear complexity~\cite{chan1983algorithms}, the design of a possibly more efficient quantum algorithm is not straightforward. To the best of our knowledge, there is no prior work on computing the variance over a dataset encoded in the amplitudes.

In this work, we therefore propose a 
Quantum subroutine designed for computing the Variance (QVAR) over a dataset, where the instances are encoded within the amplitudes of a quantum state~\cite{schuld2018supervised}. After state preparation, the subroutine complexity exhibits logarithmic characteristics in both its width and depth, underscoring valuable properties.
The QVAR subroutine lends itself naturally to applications in Artificial Intelligence. Therefore, in this study, we also propose two QVAR-based quantum algorithms: one designed for addressing the  \textit{Feature Selection} tasks (HQFS) and the other tailored for solving the \textit{Outlier Detection} problem (QODA)~\cite{tan2006introduction}. Both algorithms follow a hybrid paradigm where classical and quantum computations are interleaved to solve a given task. This paradigm enables the optimal utilization of current quantum technologies, where problems like short decoherence time, noisy qubits, and limited hardware capabilities make the execution of full quantum algorithms practically impossible.

In this work, we design the QVAR subroutine, and the HQFS, and QODA algorithms. 
The design is supported by defining their respective circuits and presenting mathematical proofs of correctness. The proposed quantum algorithms have been experimentally evaluated, comparing their performance with the classical counterparts. Additionally, we provide the Qiskit\footnote{\url{https://qiskit.org/}} implementations for QVAR\footnote{\url{https://github.com/AlessandroPoggiali/QVAR}}, HQFS\footnote{\url{https://github.com/AlessandroPoggiali/HQFS}}, and QODA\footnote{\url{https://github.com/AlessandroPoggiali/QODA}}. Ultimately, due to the large variety of applications that QVAR can find, we support the development of those paths by creating the PyPI package\footnote{\url{https://pypi.org/project/QVAR/}} of QVAR.
This manuscript represents the extended version of the conference paper presented in~\cite{poggiali2023quantum}. With respect to the conference paper, we give a detailed analytical description of the QVAR subroutine, together with a formal proof of correctness. Moreover, we introduce the novel QODA algorithm for Outlier Detection and enhance the experimental evaluation, providing a more comprehensive exploration of the subject matter.

The rest of the manuscript is organized as follows. 
Section~\ref{sec:related} presents the related works and motivates our solution with respect to the state-of-the-art, 
Section~\ref{sec:qvar} introduces the quantum subroutine for computing the variance (QVAR), Section~\ref{sec:HQFS} presents the Quantum Feature Selection algorithm (HQFS), Section~\ref{sec:QODA} describes the Quantum Outlier Detection algorithm (QODA), 
Section~\ref{sec:experiments} validates the effectiveness of the proposed algorithms, and, eventually, Section~\ref{sec:conclusion} draws the conclusions according to the obtained results.

\section{Related work}\label{sec:related}
The variance measures the variability over a set of data, and it is a fundamental building block to design algorithms whose applications range across different domains such as Statistical Analysis, Optimization Algorithms, and Machine Learning~\cite{7160458}. 

Many algorithms for the computation of the variance with stable and effective approaches have been proposed in the literature (e.g., see~\cite{chan1983algorithms,Ling1974ComparisonOS}). However, as far as we know, the literature on quantum computing lacks specific works on the computation of the variance.  

Hereby, we design a quantum subroutine that computes the variance to address two fundamental problems that leverage the variance itself: the \textit{Feature Selection} and the \textit{Outlier Detection}. 

\textit{Feature Selection} techniques have the objective of selecting the most relevant features from a dataset, a crucial step in data analysis~\cite{tan2006introduction}. Indeed, Feature Selection proves to be notably advantageous in achieving efficient data reduction~\cite{guyon2003introduction,cai2018feature}. 
The subset of features this preprocessing identifies can contribute to developing more robust models.
Among the Feature Selection techniques~\cite{kumar2014feature}, hereby we focus on unsupervised Feature Selection methods, i.e., methods that do not use explicit target labels or class information to select the relevant features. In particular, we consider Feature Selection techniques tailored for numerical variables, wherein features with variances falling below a specified threshold are removed~\cite{tan2006introduction}.
While the existing literature has extensively investigated Feature Selection techniques for classical algorithms, novel challenges emerge in the design of Feature Selection algorithms within the domain of quantum computing. Some Quantum Feature Selection algorithms have been recently proposed in the literature. In~\cite{mucke2022quantum}, the authors propose a feature selection algorithm based on a quadratic unconstrained binary optimization (QUBO) problem. In~\cite{nembrini2021feature}, a solution for the \textit{quantum annealing} paradigm is given. Eventually, the work~\cite{chakraborty2020hybrid} describes a quantum feature selection algorithm that deals with binary values. In this work, we introduce a Feature Selection algorithm (HQFS) based on the gate model, wherein values are encoded within the amplitudes of a quantum state. 

\textit{Outlier Detection} is crucial across various domains~\cite{boukerche2020outlier}, encompassing Fraud Detection, Medical Anomaly Diagnosis, and Engineering. The significance attributed to outliers varies based on the contextual requirements. In data analysis, outliers can significantly impact the results, leading to incorrect conclusions, decisions, and models. 
Conversely, outliers are the critical information crucial to understand and address in domains like fraud detection. Recently, quantum computing has shown great potential to solve computationally expensive problems for classical computers, including data analysis tasks. Some techniques for detecting outliers in quantum computing are present in the literature. In~\cite{Guo_2022,Liang_2019}, the authors present a quantum anomaly detection algorithm based on density estimation. Furthermore, in the study~\cite{mazouzi2020hybrid}, a methodology for anomaly detection is introduced, leveraging both Hamming Distance and Durr-Hoyer algorithms.
With respect to the state-of-the-art, we design a hybrid quantum algorithm for anomaly detection (QODA) inspired by the classical Angle-Based Outlier Detection (ABOD) algorithm~\cite{kriegel2008angle}. ABOD is an unsupervised distance-based method for detecting outliers that measures the abnormality of each data point by computing the variance of the angles between all pairs of vectors in the dataset (see Section~\ref{sec:QODA} for more details). 

\section{A Quantum Subroutine for Variance Estimation}\label{sec:qvar}
In this section, we introduce the QVAR subroutine for computing the variance of a set of values encoded within the amplitudes of a quantum state. 
 
\begin{figure}[t]
\begin{center}
    \centering
    \adjustbox{scale=0.6,center}{

\begin{quantikz}[ row sep=0.2cm, column sep=0.2cm]%
\lstick{$\ket{a}=\ket{0}$}& \qw          & \gate{H} & \ctrl{1} & \qw      & \ctrl{2} & \qw      & \qw \ \ldots\ & \ctrl{4} & \qw      & \ctrl{5} & \ctrl{6} & \qw \ \ldots\ & \ctrl{8} & \ctrl{9} & \ctrl{10}& \qw \ \ldots\ & \ctrl{12} & \ctrl{1} & \ctrl{2} & \qw \ \ldots\ & \ctrl{4} & \gate{H} & \qw      & \qw       \\
\lstick{$\ket{e_1}=\ket{0}$ }& \qw       & \qw      & \targ{}  & \gate{X} & \qw      & \qw      & \qw \ \ldots\ & \qw      & \qw      & \qw      & \qw      & \qw \ \ldots\ & \qw      & \qw      & \qw      & \qw \ \ldots\ & \qw       & \gate{H} & \qw      & \qw \ \ldots\ & \qw      & \qw      & \qw      & \qw       \\
\lstick{$\ket{e_2}=\ket{0}$}& \qw        & \qw      & \qw      & \qw      & \targ{}  & \gate{X} & \qw \ \ldots\ & \qw      & \qw      & \qw      & \qw      & \qw \ \ldots\ & \qw      & \qw      & \qw      & \qw \ \ldots\ & \qw       & \qw      & \gate{H} & \qw \ \ldots\ & \qw      & \qw      & \qw      & \qw       \\
\lstick{$\ldots$}               &\\
\lstick{$\ket{e_n}=\ket{0}$}& \qw        & \qw      & \qw      & \qw      & \qw      & \qw      & \qw \ \ldots\ & \targ{}  & \gate{X} & \qw      & \qw      & \qw \ \ldots\ & \qw      & \qw      & \qw      & \qw \ \ldots\ & \qw      & \qw       &  \qw     & \qw \ \ldots\ & \gate{H} & \qw      & \qw      & \qw       \\

\lstick{$\ket{q_1}=\ket{0}$}& \qw        & \qw      & \qw      & \qw      & \qw      & \qw      & \qw \ \ldots\ & \qw      & \qw      & \swap{4} & \qw      & \qw \ \ldots\ & \qw      & \qw      & \qw      & \qw \ \ldots\ & \qw      & \qw       & \qw      & \qw \ \ldots\ & \qw      & \gate{H} & \gate{X} & \qw       \\
\lstick{$\ket{q_2}=\ket{0}$} & \qw        & \qw      & \qw      & \qw      & \qw      & \qw      & \qw \ \ldots\ & \qw      & \qw      & \qw      & \swap{4} & \qw \ \ldots\ & \qw      & \qw      & \qw      & \qw \ \ldots\ & \qw      & \qw       & \qw      & \qw \ \ldots\ & \qw      & \gate{H} & \gate{X} & \qw       \\
\lstick{$\ldots$}               & \\
\lstick{$\ket{q_n}=\ket{0}$}  & \qw       & \qw      & \qw      & \qw      & \qw      & \qw      & \qw \ \ldots\ & \qw      & \qw      & \qw      & \qw      & \qw \ \ldots\ & \swap{4} & \qw      & \qw      & \qw \ \ldots\ & \qw      & \qw       & \qw      & \qw \ \ldots\ & \qw      & \gate{H} & \gate{X} & \qw       \\ 

\lstick{$\ket{i_1}=\ket{0}$} & \gate[wires=4,nwires={3}]{\begin{array}{c}\text{Amplitude} \\ \text{Encoding}\\ \text{\ket{D}}\end{array}}  & \qw      & \qw      & \qw      & \qw      & \qw      & \qw \ \ldots\ & \qw      & \qw      & \swap{}  & \qw      & \qw \ \ldots\ & \qw      & \gate{H} & \qw      & \qw \ \ldots\ & \qw      & \qw       & \qw      & \qw \ \ldots\ & \qw      & \qw      & \qw      & \qw        \\
\lstick{$\ket{i_2}=\ket{0}$} & \qw        & \qw      & \qw      & \qw      & \qw      & \qw      & \qw \ \ldots\ & \qw      & \qw      & \qw      & \swap{}  & \qw \ \ldots\ & \qw      & \qw      & \gate{H} & \qw \ \ldots\ & \qw      & \qw       & \qw      & \qw \ \ldots\ & \qw      & \qw      & \qw      & \qw        \\
\lstick{$\ldots$}               & \\
\lstick{$\ket{i_n}=\ket{0}$} & \qw        & \qw      & \qw      & \qw      & \qw      & \qw      & \qw \ \ldots\ & \qw      & \qw      & \qw      & \qw      & \qw \ \ldots\ & \swap{}  & \qw      & \qw      & \qw \ \ldots\ & \gate{H} & \qw       & \qw      & \qw \ \ldots\ & \qw      & \qw      & \qw      & \qw       \\
\end{quantikz}}
\caption{QVAR Oracle}
\label{fig:qvar}
\end{center}
\end{figure}
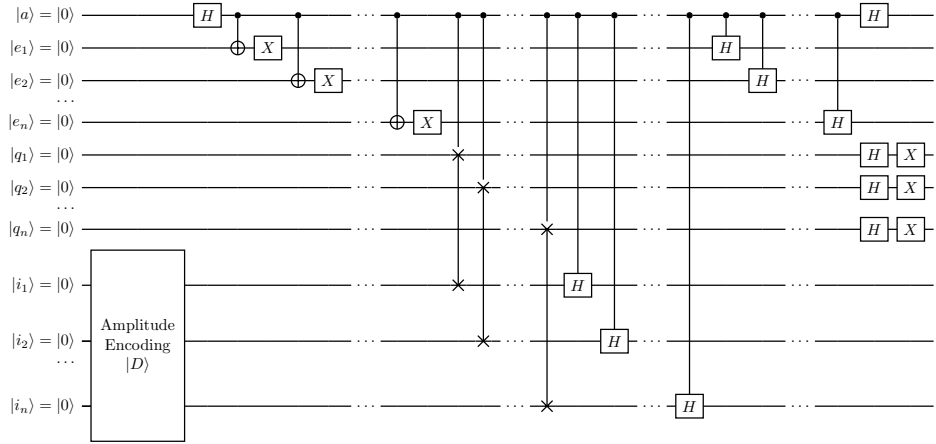

Given a set $D = \{d_0, d_1, \dots, d_{N-1}\}$ of $N$ values, we are interested in estimating the variance of $D$, defined as
$$
\rm{Var}(D)=\frac{1}{N} \sum_{t=0}^{N-1} (d_t-\mathbb{E}[D])^2,
$$
where $\mathbb{E}[D]=\frac{1}{N}\sum_{t=0}^{N-1} d_t$ is the mean value of the data. 

Note that $D$ can be either a set of classical values or the result of a previous quantum computation, i.e., a superposition of values. In the first case, one should adopt an encoding algorithm for loading the classical data into the quantum state. However, QVAR can generally be applied as a quantum subroutine to a quantum superposition of values indexed by proper register qubits. 

The QVAR subroutine comprises the quantum oracle reported in Figure \ref{fig:qvar}, which the Amplitude Estimation algorithm~\cite{brassard2002quantum} uses to estimate the variance value. This algorithm requires in total $3n+s+1$ qubits, where $n=\ceil{\log_2 N}$ and $s=O(\log\frac{1}{\varepsilon})$ is the number of additional qubits required by Amplitude Estimation to get an estimate of the variance with (absolute) error $\varepsilon$ with respect to the classical variance. 
The formal proof of correctness of the QVAR oracle of Figure~\ref{fig:qvar} is reported below. The main idea is to use a qubit ancilla $\ket{a}$ to create an equal superposition of two branches through a Hadamard gate. In the branch where the ancilla is in state $\ket{0}$, we maintain the superposition encoding the values in $D$, while in the branch where the ancilla is in state $\ket{1}$, we apply a sequence of gates that compute the mean $\mathbb{E}[D]$. Eventually, we apply another Hadamard gate on $\ket{a}$ to make the two branches collide and create the superposition containing the component of the variance, namely the differences $d_t - \mathbb{E}[D]$, for $t=0,\dots,N-1$. 
Since all these differences are the amplitudes of the state vector related to specific configurations of the qubits, we can use the Amplitude Estimation algorithm to estimate the variance as the sum of squares of the target amplitudes. In our case, we employ AE to estimate the amplitude of the target configuration $\ket{ae^{\otimes n}q^{\otimes n}} = \ket{1 1^{\otimes n} 0^{\otimes n} }$ by measuring the $s$ additional qubits. The measurement output represents an approximation of the variance in the computational basis. Concerning the complexity of this method, in general, the depth of AE is $O(\delta\frac{1}{\varepsilon}+\log \log \frac{1}{\varepsilon})$~\cite{giurgica2022low} where $\delta$ is the depth of the oracle. In our case, assuming the existence of an efficient quantum state preparation technique, $\delta=O({\rm{polylog}} \, N)$. Thus, the overall complexity of QVAR is $O(\frac{1}{\varepsilon}{\rm{polylog}} \, N)$. We can increase the precision of QVAR by interpolating the measurement probabilities and computing the maximum likelihood estimator (we call this method ML-QVAR)~\cite{grinko2021iterative}.

The correctness of QVAR is proved in the  following proposition.

\begin{proposition}
Let  $D = \{d_0, d_1, \dots, d_{N-1}\}$ be a set of $N$ values encoded in the quantum state of $n=\log N$ qubits as follows:
\begin{equation*}
\ket{D}=\frac{1}{\sqrt{N}} \sum_{t=0}^{N-1}d_t \ket{t}_i,
\end{equation*}
where $\sqrt{N}$ is the normalization factor. Then, the quantum oracle of the QVAR algorithm 
starting from the quantum state
\begin{equation*}
    \ket{\psi_0} = \ket{0}_a\ket{0}_e^{\otimes n}\ket{0}_q^{\otimes n} \ket{D},
\end{equation*}
returns in the configuration of the state vector equal to  $\ket{1}_a\ket{1}_e^{\otimes n} \ket{0}_q^{\otimes n}$ the terms $d_t-\mathbb{E}[D]$. 
\end{proposition}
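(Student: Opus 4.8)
The plan is to follow separately the two branches that the ancilla $\ket{a}$ opens up in the oracle of Figure~\ref{fig:qvar}, and then to recombine them with the closing Hadamard on $\ket{a}$. First I would apply the initial $H$ on $\ket{a}$ to reach $\tfrac{1}{\sqrt2}\big(\ket{0}_a+\ket{1}_a\big)\ket{0}_e^{\otimes n}\ket{0}_q^{\otimes n}\ket{D}$, and then note that every gate in the body of the circuit is either controlled by $\ket{a}$ or acts identically on both branches, so the evolution factorizes into an ``$a=0$'' branch and an ``$a=1$'' branch that can be analysed independently. The entire argument then reduces to showing that, at the configuration of the $e$- and $q$-registers singled out in the statement, the $a=0$ branch carries the amplitude $d_t$ while the $a=1$ branch carries the constant amplitude $\mathbb{E}[D]$; the final Hadamard subtracts the two.

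For the $a=0$ branch I would observe that the paired \textsc{cnot}/$X$ gates leave the register in $\ket{1}_e^{\otimes n}$ (the \textsc{cnot} is inactive and the $X$ flips each $e_j$), while the controlled swaps and the controlled Hadamards on $\ket{i}$ and $\ket{e}$ do nothing. Hence $\ket{D}=\tfrac{1}{\sqrt N}\sum_t d_t\ket{t}_i$ remains untouched in the index register and $\ket{q}=\ket{0}_q^{\otimes n}$. The only remaining action is the uncontrolled $H^{\otimes n}$ (followed by $X^{\otimes n}$) on $\ket{q}$, which spreads $\ket{0}_q^{\otimes n}$ uniformly and therefore contributes a factor $1/\sqrt N$ at every $q$-pattern. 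I would conclude that this branch contributes amplitude $\tfrac{1}{N}d_t$ at $\ket{1}_e^{\otimes n}$, the marked $q$-pattern and $\ket{t}_i$.

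For the $a=1$ branch I would track the mean explicitly. The \textsc{cnot}/$X$ pairs now leave $\ket{0}_e^{\otimes n}$; the controlled swaps move $\ket{D}$ into the $q$-register and reset $\ket{i}$ to $\ket{0}_i^{\otimes n}$; the controlled $H^{\otimes n}$ on $\ket{i}$ turns it into the uniform index register $\tfrac{1}{\sqrt N}\sum_t\ket{t}_i$, and the controlled $H^{\otimes n}$ on $\ket{e}$ makes $\ket{1}_e^{\otimes n}$ appear with weight $1/\sqrt N$. The crucial step is the uncontrolled $H^{\otimes n}$ on the $q$-register: applied to $\tfrac{1}{\sqrt N}\sum_t d_t\ket{t}_q$ it yields $\tfrac1N\sum_v\big(\sum_t(-1)^{t\cdot v}d_t\big)\ket{v}_q$, whose all-zero amplitude is exactly $\tfrac1N\sum_t d_t=\mathbb{E}[D]$, and the closing $X^{\otimes n}$ then relocates this amplitude to the $q$-pattern named in the statement. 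Collecting the two $1/\sqrt N$ factors from $\ket{e}$ and $\ket{i}$, this branch contributes amplitude $\tfrac1N\,\mathbb{E}[D]$, independent of $t$, at the very same configuration.

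Finally I would apply the closing $H$ on $\ket{a}$, whose $\ket{1}_a$ component equals $\tfrac12(\mathrm{branch}_0-\mathrm{branch}_1)$, so that the amplitude at $\ket{1}_a\ket{1}_e^{\otimes n}\ket{0}_q^{\otimes n}\ket{t}_i$ becomes $\tfrac{1}{2N}\big(d_t-\mathbb{E}[D]\big)$, proportional to the claimed difference. The hard part will be the bookkeeping of the two Hadamard transforms on the index registers: one must verify that in the $a=1$ branch it is precisely the mean, and nothing else, that survives at the post-selected $q$-pattern while the output register $\ket{i}$ stays uniform, check carefully which bit-pattern the closing $X^{\otimes n}$ layer actually maps the all-zero amplitude to, and confirm that both branches share the common prefactor $1/N$ so that it cancels cleanly in the subtraction. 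Once the amplitude is identified with $d_t-\mathbb{E}[D]$, summing its squares over $t$, the task later delegated to Amplitude Estimation, reproduces $\mathrm{Var}(D)$ up to the known normalization.
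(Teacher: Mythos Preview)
Your proposal is correct and takes essentially the same route as the paper: split into the two ancilla branches, verify that the $a{=}0$ branch carries $d_t$ and the $a{=}1$ branch carries $\mathbb{E}[D]$ at the marked $e$- and $q$-configuration, then recombine with the closing Hadamard on $\ket{a}$ to obtain the amplitude $\tfrac{1}{2N}(d_t-\mathbb{E}[D])$. The only difference is presentational---the paper writes out the full intermediate states $\ket{\psi_1},\dots,\ket{\psi_7}$ before projecting rather than reasoning branch-by-branch---and note that the paper's proof in fact stops \emph{before} the final $X^{\otimes n}$ layer on $\ket{q}$ and projects onto $\ket{0}_q^{\otimes n}$ there, which resolves the bookkeeping point you correctly flagged.
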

\begin{proof}
Following the quantum operations depicted in Figure~\ref{fig:qvar}, the Hadamard gate on $\ket{0}_a$ creates two branches of computation: the branch where the qubit is in state $\ket{0}$ at the end of the computation will contain the superposition of all $d_t$. While the branch where the qubit is in state $\ket{1}$ will contain $N$ copies of the mean $\mathbb{E}[D]$. Indeed, the first $H$ gate on the ancilla qubit evolves the initial state $\ket{\psi_0}$ in 
{\small\begin{equation*}
    \ket{\psi_1} = \frac{1}{\sqrt{2N}}\left(\ket{0}_a \ket{0}_e^{\otimes n}\ket{0}_q^{\otimes n}\sum_{t=0}^{N-1}d_t\ket{t}_i +\ket{1}_a \ket{0}_e^{\otimes n}\ket{0}_q^{\otimes n}\sum_{t=0}^{N-1}d_t\ket{t}_i\right).
\end{equation*}}
The C-NOT, controlled by the ancilla qubit, and the X gates on register $e$ bring the state in:
 \begin{equation*}
    \ket{\psi_2} = \frac{1}{\sqrt{2N}}\left(\ket{0}_a \ket{1}_e^{\otimes n}\ket{0}_q^{\otimes n}\sum_{t=0}^{N-1}d_t\ket{t}_i +\ket{1}_a \ket{0}_e^{\otimes n}\ket{0}_q^{\otimes n}\sum_{t=0}^{N-1}d_t\ket{t}_i\right).
\end{equation*}
The C-SWAP gates swap the qubits of register $i$ and register $q$ in the branch where the ancilla qubit is in state $\ket{1}$, leading to the state:
\begin{equation*}
    \ket{\psi_3} = \frac{1}{\sqrt{2N}}\left(\ket{0}_a \ket{1}_e^{\otimes n}\ket{0}_q^{\otimes n}\sum_{t=0}^{N-1}d_t\ket{t}_i +\ket{1}_a \ket{0}_e^{\otimes n}\sum_{t=0}^{N-1}d_t\ket{t}_q\ket{0}_i^{\otimes n}\right).
\end{equation*}
Since we want the mean of $D$ in the branch where the ancilla is \ket{1}, we apply $H$ gates on each qubit of register $i$ controlled by the value of the ancilla. This produces the following state:
{\begin{equation*}
    \ket{\psi_4} = \frac{1}{\sqrt{2N}}\left(\ket{0}_a \ket{1}_e^{\otimes n}\ket{0}_q^{\otimes n}\sum_{t=0}^{N-1}d_t\ket{t}_i +\ket{1}_a \ket{0}_e^{\otimes n}\sum_{t=0}^{N-1}d_t\ket{t}_q \ket{+}_i^{\otimes n}\right),
\end{equation*}}
where $\ket{+}=\frac{\ket{0}+\ket{1}}{\sqrt{2}}.$
In order to have the mean value $\mathbb{E}[D]$, we apply an H gate on each qubit of register $e$ controlled by the ancilla. This results in:
\begin{equation*}
    \ket{\psi_5} = \frac{1}{\sqrt{2N}}\left(\ket{0}_a \ket{1}_e^{\otimes n}\ket{0}_q^{\otimes n}\sum_{t=0}^{N-1}d_t\ket{t}_i +\ket{1}_a \ket{+}_e^{\otimes n}\sum_{t=0}^{N-1}d_t\ket{t}_q \ket{+}_i^{\otimes n}\right).
\end{equation*}
The next step involves applying another $H$ gate on the first qubit, which will later be exploited to collide each $d_t$ with the mean $\mathbb{E}[D]$. The state obtained is:
\begin{equation*}
\begin{split}
    \ket{\psi_6} &= \frac{1}{2\sqrt{N}}\left(\ket{0}_a \ket{1}_e^{\otimes n}\ket{0}_q^{\otimes n}\sum_{t=0}^{N-1}d_t\ket{t}_i +\ket{1}_a\ket{1}_e^{\otimes n}\ket{0}_q^{\otimes n}\sum_{t=0}^{N-1}d_t\ket{t}_i\right. \\ & 
   \left. + \ket{0}_a \ket{+}_e^{\otimes n}\sum_{t=0}^{N-1}d_t\ket{t}_q \ket{+}_i^{\otimes n} -\ket{1}_a \ket{+}_e^{\otimes n}\sum_{t=0}^{N-1}d_t\ket{t}_q \ket{+}_i^{\otimes n} \right).
\end{split}
\end{equation*}
Then, the $H$ gates on each qubit of register $q$ are required to form the actual sums between each $d_t$. 
Recall that $H^{\otimes n} \ket{t}^{\otimes n} = \tfrac{1}{\sqrt{N}} \sum_{z=0}^{N-1}(-1)^{t\cdot z}\ket{z}$, where $t \cdot z$ is the scalar product between the binary representations of $t$ and $z$. The state becomes:
\begin{equation*}
\begin{split}
    \ket{\psi_7} &= \frac{1}{2\sqrt{N}}\left(\ket{0}_a\ket{1}_e^{\otimes n}\ket{+}_q^{\otimes n}\sum_{t=0}^{N-1}d_t\ket{t}_i +\ket{1}_a \ket{1}_e^{\otimes n}\ket{+}_q^{\otimes n}\sum_{t=0}^{N-1}d_t\ket{t}_i  \right.\\ & 
    +\ket{0}_a \ket{+}_e^{\otimes n}\sum_{t=0}^{N-1}d_t\left(\frac{1}{\sqrt{N}}\sum_{z=0}^{N-1}(-1)^{t\cdot z}\ket{z}_q\right) \ket{+}_i^{\otimes n} \\& - 
   \left.   \ket{1}_a \ket{+}_e^{\otimes n}\sum_{t=0}^{N-1}d_t \left(\frac{1}{\sqrt{N}}\sum_{z=0}^{N-1}(-1)^{t\cdot z}\ket{z}_q\right) \ket{+}_i^{\otimes n}\right)\\ &
     =  \frac{1}{2\sqrt{N}}\left(\ket{0}_a\ket{1}_e^{\otimes n}\ket{+}_q^{\otimes n}\sum_{t=0}^{N-1}d_t\ket{t}_i +\ket{1}_a \ket{1}_e^{\otimes n}\ket{+}_q^{\otimes n}\sum_{t=0}^{N-1}d_t\ket{t}_i \right.\\&
        +\frac{1}{\sqrt{N}} \ket{0}_a \ket{+}_e^{\otimes n}\sum_{t=0}^{N-1}d_t\left(\ket{0}_q^{\otimes n}
    + \sum_{z=1}^{N-1}(-1)^{t\cdot z}\ket{z}_q\right) \ket{+}_i^{\otimes n}  \\
    &- \left.\frac{1}{\sqrt{N}} \ket{1}_a \ket{+}_e^{\otimes n}\sum_{t=0}^{N-1}d_t \left(\ket{0}_q^{\otimes n} + \sum_{z=1}^{N-1}(-1)^{t\cdot z}\ket{z}_q\right) \ket{+}_i^{\otimes n}\right).
\end{split}
\end{equation*}
We are interested in the configuration $\ket{1}_a\ket{1}_e^{\otimes n} \ket{0}_q^{\otimes n}$: 
\begin{equation*}
\begin{split}
\bra{1}_a\bra{1}_e^{\otimes n} \bra{0}_q^{\otimes n}\ket{\psi_7}=&\frac{1}{2\sqrt{N}}  \left(\frac{1}{(\sqrt{N}}\sum_{t=0}^{N-1}d_t\ket{t}_i-\frac{1}{2^n}\sum_{t=0}^{N-1}d_t\ket{+}_i^{\otimes n}\right)\\&
= \frac{1}{2\sqrt{N}}\left(\frac{1}{\sqrt{N}}\sum_{t=0}^{N-1}d_t \ket{t}_i -\frac{1}{N}\ket{+}_i^{\otimes n}\sum_{t=0}^{N-1}d_t\right)\\&
\end{split}
\end{equation*}
Changing the index in the last sum, and expressing $\ket{+}_i^{\otimes n}$ in the computational basis, we obtain:
\begin{equation*}
\begin{split}
&\frac{1}{2N}\left(\sum_{t=0}^{N-1}d_t \ket{t}_i -\sum_{t=0}^{N-1}\ket{t}_i\frac{1}{N}\sum_{k=0}^{N-1}d_k\right) \\&
= \frac{1}{2N}\ket{1}_a\ket{1}_e^{\otimes n}\ket{0}_q^{\otimes n}\sum_{t=0}^{N-1} \ket{t}_i \left(d_t-\frac{1}{N}\sum_{k=0}^{N-1}d_k\right),
\end{split}
\end{equation*}
from which it is evident that the components of the variance, up to a normalization factor, are associated with the configurations of the state vector equal to $\ket{1}_a\ket{ 1}_e^{\otimes n} \ket{0}_q^{\otimes n}$.
\end{proof}

\section{Quantum Feature Selection}\label{sec:HQFS}
In this section, we study the application of the QVAR subroutine to the Feature Selection task. As discussed in~\cite{tan2006introduction}, given a multidimensional dataset,  we can filter out features with low variance. Indeed, low variance features do not vary significantly among the different records in the dataset and are less useful for discriminative tasks.   
Hereby, we design an efficient Hybrid Quantum Feature Selection (HQFS) algorithm that exploits the QVAR subroutine for discarding low variance features.  A detailed description follows.

Given a multidimensional dataset $D \in \mathbb{R}^{M \times N}$ with $M$ records and $N$ features, we denote as $D_{i,j}$ the $j$-th feature value of the $i$-th record. The HQFS algorithm iterates over the features and computes the variances $\rm{Var}_j =\rm{Var}(d_{0,j},\cdots,d_{M-1,j})$ for $j \in [ 0,\cdots,N-1]$. HQFS employs a distinct QVAR circuit for each feature $j$ to calculate each variance. Overall, HQFS uses $N$ QVAR circuits, one for each feature. Then, HQFS drops all features whose variance is below a given threshold $t$.

\begin{algorithm2e}[ht!]
\scriptsize
        \KwIn{$D$ - input dataset, $t$ - variance threshold}
        \KwOut{$F$ - list of selected features}
        \BlankLine
        $F \gets \{f_0,\cdots,f_{N-1}\}$ \tcp*[f]{set of selected features initially containing all features}
        \BlankLine
        \For(\tcp*[f]{classically iterate among all feature}){$j \in [0,\cdots N-1]$}{
                $\rm{Var}_j\gets QVAR(D_{:,j})$ \tcp*{compute the variance using the QVAR algorithm.}
                \BlankLine
                \If(\tcp*[f]{check if feature $f_j$ is uninformative}){$\rm{Var}_j \leq t$}{
                 $F \gets F \setminus \{f_j\}$
            \tcp*{remove feature $f_j$ to the set of selected features}
            }
            }
        \Return $F$ \tcp*{return the list of selected features}
\caption{HQFS}
\label{alg:HQFS}
\end{algorithm2e}

We can use either the QVAR or ML-QVAR as a variance estimation method. In the following, we refer to HQFS as the version using QVAR and ML-HQFS as the version using ML-QVAR. Since the depth of AE is $O(\delta\frac{1}{\varepsilon}+\log \log \frac{1}{\varepsilon})$, with $\delta$ the depth of the oracle, which in our case is $O({\rm{polylog}} \,M)$, the overall complexity of Algorithm~\ref{alg:HQFS} is $O(N(\frac{1}{\varepsilon}{\rm{polylog}} \,M))$, if we assume available efficient methods for reconstructing the initial state. 

\section{Quantum Outlier Detection}\label{sec:QODA}
This section introduces a hybrid quantum algorithm inspired by the classical Angle-Based Outlier Detection (ABOD) algorithm~\cite{kriegel2008angle}. 

ABOD is a distance-based unsupervised Outlier Detection method that measures the abnormality of each data point by computing the variance of the angles between the difference vectors of the given point and the other points in the dataset. Indeed, the variance of these angles depends on how different a point is from the others:  for points within a cluster, the angles differ widely, while outliers are expected to present a smaller variance because they are further away from the rest of the dataset. Hence, the points with smaller ABOD values are considered to be more likely to be outliers.

ABOD has several advantages over other Outlier Detection techniques, including detecting outliers in high-dimensional datasets. However, it may not perform well on datasets with a high degree of overlap between classes or when the dataset contains clusters of outliers. Moreover, it has a cost of $O(M^3 + M^2 N)$ where $M$ is the number of records, and $N$ is the dimension of each record; hence the cost is very high. 

The classical algorithm works as follows. 
Assume we have $M$ records, each with $N$-dimensions, stored in a matrix $X=[x_1| x_2|\cdots|x_M] \in \mathbb{R}^{N\times M}$, so that each record can be seen as belonging to a vector space of dimension $N$. We iterate over all records, considering each one as a pivot. Thus, let $x_p$, $1 \le p \le M$, be the current pivot.  We denote by $\theta_{ij}^{(p)}$ the angle between the vectors  $x_i$ and $x_j$ observed from the pivot $x_p$, i.e., the angle between the difference vectors  $\tilde x_i = x_i - x_p$ and $\tilde x_j = x_j - x_p$. 
Mathematically, we can write this as:
\begin{equation}\label{eq:theta}
    \theta_{ij}^{(p)}=\arccos\left(\frac{\tilde x_i^T \tilde x_j}{\|\tilde x_i\|\|\tilde x_j\|}\right),
\end{equation}
where all norms are 2-norms.
For each $p\in \{ 1,   \ldots, M\}$, we compute  the variance $v_p$ of $\theta_{ij}^{(p)}$, with $1 \le i < j \le M$ and $i, j \neq p$, i.e., 
$v_p=\mbox{Var}\left(\theta_{ij}^{(p)}\right)=\mathbb{E}{[(\theta_{ij}^{(p)})^2]}-\mathbb{E}^2[\theta_{ij}^{(p)}]$. 
 
Once the vector $V=[v_1,v_2, \ldots, v_M]$  containing the variances is ready, the algorithm detects the outliers as the records with variance below a given threshold. The algorithm is usually implemented by computing the variance directly on the inner products, i.e., on the cosine of $\theta_{ij}^{(p)}$~\cite{kriegel2008angle}.

\begin{figure}[t]

\begin{quantikz}[transparent, row sep=0.05cm, column sep=0.3cm]%
\lstick{$\ket{d}=\ket{0}$}             &\qwbundle{1} & \gate{H}      & \ctrl{2} & \gate{X}                        & \ctrl{1} & \qw           &  \gate{H}                        &   \qw              \\
\lstick{$\ket{j}=\ket{0}^{\otimes m}$} &\qwbundle{m} & \gate{H} & \qw & \qw &\gate[wires=3, label style={yshift=0.8cm}]{\begin{array}{c}\text{Amplitude} \\ \text{Encoding}\\ \text{\ket{D}}\end{array}} \qw & \qw & \qw & \qw & \\
\lstick{$\ket{i}=\ket{0}^{\otimes m}$} &\qwbundle{m}& \gate{H} \qw & \gate[wires=2]{\begin{array}{c}\text{Amplitude} \\ \text{Encoding}\\ \text{\ket{D}}\end{array}} \qw & \qw & \linethrough\qw & \qw & \qw & \qw & \\
\lstick{$\ket{k}=\ket{0}^{\otimes n}$} &\qwbundle{n}& \gate{H} \qw & \qw & \qw & \qw & \qw & \qw & \qw &\\
\end{quantikz}

\caption{Computation of differences.}
\label{fig:differences_scomputation}
\end{figure}
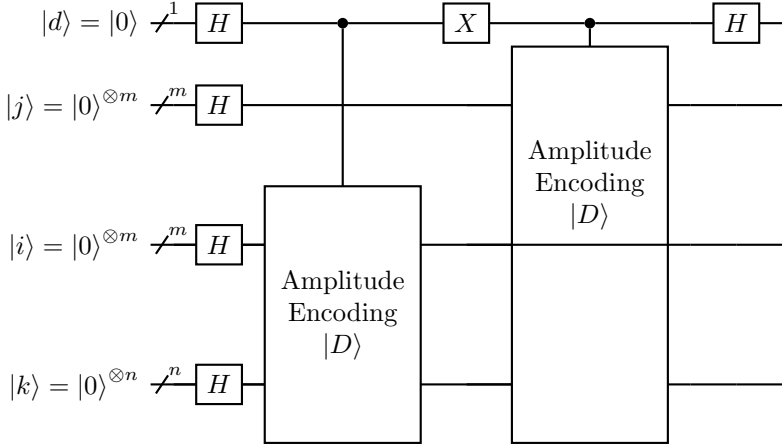

We design a hybrid quantum version of ABOD called Quantum Outlier Detection Algorithm (QODA). QODA iterates over all records in a given dataset and calls a quantum subroutine to test whether the current record is an outlier on a single quantum circuit. Alternatively, we could test each record in the dataset by employing distinct quantum circuits that can run in parallel.  

QODA mimics the classical ADOB approach, exploiting the variance as a statistical measure to detect
outliers, allowing a reduction in the computational complexity of the classical version thanks to the quantum oracle for the computation of the variance proposed in Section~\ref{sec:qvar}. 
A crucial subroutine of QODA is, therefore, an efficient quantum heuristic for estimating the variance of all angles  $\theta_{ij}^{(p)}$, for a given sample $x_p$, taken as a pivot. Thus, we employ the QVAR subroutine defined in Section~\ref{sec:qvar}.

This heuristic is based on the idea of approximating the variance of the angles $\theta_{ij}^{(p)}$ in~\eqref{eq:theta} with the variance of the differences  between the vectors translated by the pivot $x_p$ and then normalized. 
For all $1 \le i \le M$,  we (classically) compute $\tilde x_i = x_i - x_p$. Then, we normalize all vectors  so that the denominators in equation~\eqref{eq:theta} vanish, adopting as default data preprocessing the Inverse Stereographic Projection
(ISP), in order to keep the clusters separated, as discussed in~\cite{poggiali2022quantum}. 

Recall that the ISP maps $N$-dimensional data into the surface of a unit sphere
in the $(N+1)$-dimensional space and is computed as follows.
Given a vector $v=(v_1, v_2, \ldots, v_N) \in \mathbb{R}^N$, we have: 

$$\mbox{ISP}(v)=\left(\frac{2v_1}{\|v\|^2+1},\frac{2v_2}{1+\|v\|^2}, \ldots, \frac{2v_N}{1+\|v\|^2}, \frac{-1+\|v\|^2}{1+\|v\|^2}\right).
$$ 
Notice that  $\|\mbox{ISP}(v)\| = 1$. 
Thus, let $\hat x_i = \mbox{ISP}(\tilde x_i) = \mbox{ISP}(x_i - x_p) $ denote the normalized vectors, translated by $x_p$.  
We denote the scaled components of the differences between two projected records $\hat x_i$ and $\hat x_j$ as:
\begin{equation} \label{eq:delta}
\Delta^{(p)}_{ijk} = \frac{1}{2(M-1)\sqrt{N+1}} \left(\hat x_{i} - \hat x_{j}\right)_k, \quad {i, j=0, \ldots, M-1, \, k=0, \ldots, N-1.}
\end{equation} 
After encoding all the vectors $\hat x_i$ in a quantum circuit, for example using one of the QRAM models~\cite{park2019circuit,kerenidis2016quantum}, we can use  a Hadamard gate to compute simultaneously all the differences $\Delta^{(p)}_{ijk}$ in ~\eqref{eq:delta} in the amplitudes, appropriately normalized. In~Figure~\ref{fig:differences_scomputation}, we show the circuit used for this purpose. The state $$
\ket{D}= \frac{1}{\sqrt{MN}} \sum_{i=0}^{M-1}\sum_{k=0}^{N-1}\hat{x}_{ki} \ket{i}\ket{k}$$
corresponding to the input dataset $D$ is obtained using a QRAM model, and is uploaded twice in the circuit, in two different branches of the qubit $\ket{d}$, so that it is possible to compute all the differences $\Delta_{ijk}^{(p)}$ with just a Hadamard gate on the first qubit $\ket{d}$. Indeed, the state vector after the final Hadamard gate contains all the difference $(\hat x_j-\hat x_i)_k$.

Note that the dependence from the chosen pivot $x_p$ is still within $\hat x_i$ and $\hat x_j$, indeed we have 
$$
(\hat x_i - \hat x_j)_k= \left\{ \begin{array}{ll}
\frac{2(x_{ki}-x_{kp})}{1+\|x_i-x_p\|^2}-\frac{2(x_{kj}-x_{kp})}{1+\|x_j-x_p\|^2} & \mbox{ for } k=1, \ldots, N\\
& \\
\frac{2(\|x_i-x_p\|^2-\|x_j-x_p\|^2)}{(1+\|x_i-x_p\|^2)(1+\|x_j-x_p\|^2)},& \mbox{ for } k=N+1.
\end{array}
\right.
$$

In Proposition~\ref{prop2} we show that the variance of the differences $\Delta_{ijk}^{(p)}$ is a good approximation of the variance of the angles, and thus we can apply the circuit for calculating the variance in Figure~\ref{fig:qvar} to the differencies $\Delta_{ijk}^{(p)}$ instead of the angles $\theta_{ij}^{(p)}$. Indeed, the computation of the angles $\theta_{ij}^{(p)}$ requires the estimation of the inner products between the vectors and is more expensive than just applying a Hadamard gate to get the $\Delta_{ijk}^{(p)}$ in superposition.

\smallskip
\begin{proposition}\label{prop2}
Let $\Delta^{(p)}$ be the $(M-1)^2(N+1)$-dimensional vector
obtained appending all the $\Delta_{ijk}^{(p)}$, for $0 \le i, j \le M-1, (i,j \neq p), 0\le k\le N,$~and let $\theta^{(p)}$ be the $(M-1)^2$-dimensional vector whose components are the angles ${\theta}_{ij}^{(p)}$ between $\hat x_i$ and $\hat x_j$, for $1 \le i, j \le M, (i,j \neq p)$. Then:

\begin{equation}
    \rm{Var}(\Delta^{(p)}) \ < \ \frac{1}{N+1} \rm{Var}(\theta^{(p)})+\frac{1}{N+1}\mathbb{E}^2[\theta^{(p)}].
    \label{teo}
\end{equation}
\end{proposition}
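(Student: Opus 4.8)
The plan is to reduce both sides to sums over the angles $\theta_{ij}^{(p)}$ and then invoke a single elementary trigonometric inequality. First I would rewrite the right-hand side using $\mathrm{Var}(\theta^{(p)}) = \mathbb{E}[(\theta^{(p)})^2] - \mathbb{E}^2[\theta^{(p)}]$, so that $\tfrac{1}{N+1}\mathrm{Var}(\theta^{(p)}) + \tfrac{1}{N+1}\mathbb{E}^2[\theta^{(p)}] = \tfrac{1}{N+1}\mathbb{E}[(\theta^{(p)})^2]$; the subtracted-mean term in the variance is exactly cancelled by the added squared mean. Thus the claim is equivalent to $\mathrm{Var}(\Delta^{(p)}) < \tfrac{1}{N+1}\mathbb{E}[(\theta^{(p)})^2]$, which is the form I would actually verify.

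Next I would simplify the left-hand side. Since the index set for $i,j$ is symmetric and $\Delta_{ijk}^{(p)}$ is antisymmetric under $i\leftrightarrow j$, for each coordinate $k$ the sum $\sum_{i,j}(\hat x_i-\hat x_j)_k$ vanishes; hence $\mathbb{E}[\Delta^{(p)}] = 0$ and $\mathrm{Var}(\Delta^{(p)}) = \mathbb{E}[(\Delta^{(p)})^2]$, the plain second moment. I would then expand this moment: averaging $(\Delta_{ijk}^{(p)})^2$ over the $(M-1)^2(N+1)$ components and summing over $k$ collapses to $\sum_k(\hat x_i-\hat x_j)_k^2 = \|\hat x_i-\hat x_j\|^2$. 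Here I would use the defining property of the inverse stereographic projection, $\|\hat x_i\|=\|\hat x_j\|=1$, together with the definition of the angle between these two unit vectors, to get $\|\hat x_i-\hat x_j\|^2 = 2 - 2\,\hat x_i^{T}\hat x_j = 2 - 2\cos\theta_{ij}^{(p)}$. Writing $c = \tfrac{1}{2(M-1)\sqrt{N+1}}$ for the normalization constant, this yields $\mathrm{Var}(\Delta^{(p)}) = \tfrac{c^2}{(M-1)^2(N+1)}\sum_{i,j}\bigl(2 - 2\cos\theta_{ij}^{(p)}\bigr)$.

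The crux is then the elementary bound $1 - \cos\theta \le \tfrac{\theta^2}{2}$, equivalently $2 - 2\cos\theta \le \theta^2$, valid for every real $\theta$ and strict whenever $\theta\neq0$, applied term by term. Matching against $\tfrac{1}{N+1}\mathbb{E}[(\theta^{(p)})^2] = \tfrac{1}{(M-1)^2(N+1)}\sum_{i,j}(\theta_{ij}^{(p)})^2$ reduces the whole inequality to $c^2\sum_{i,j}(2-2\cos\theta_{ij}^{(p)}) < \sum_{i,j}(\theta_{ij}^{(p)})^2$, which follows from the term-wise bound together with $c^2 = \tfrac{1}{4(M-1)^2(N+1)} < 1$.

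I expect the only real subtlety, rather than a genuine obstacle, to be the bookkeeping: keeping the normalization $c$, the averaging factor $1/((M-1)^2(N+1))$, and the different dimensions of $\Delta^{(p)}$ and $\theta^{(p)}$ straight, and being explicit about strictness. The latter is guaranteed by $c^2<1$ whenever at least one angle is nonzero (the non-degenerate case, in which the points do not all project to a common point); in the fully degenerate case both sides are zero. The entire mathematical content sits in the mean-zero observation and the convexity-type inequality $1-\cos\theta \le \theta^2/2$, with everything else being algebraic simplification.
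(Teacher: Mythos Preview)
Your proposal is correct and follows essentially the same route as the paper: both reduce the claim to $\mathrm{Var}(\Delta^{(p)}) \le \tfrac{1}{N+1}\mathbb{E}[(\theta^{(p)})^2]$, expand $\sum_k(\hat x_i-\hat x_j)_k^2 = 2-2\cos\theta_{ij}^{(p)}$ using $\|\hat x_i\|=1$, and invoke the elementary bound $\cos\theta \ge 1-\theta^2/2$ term by term. You are in fact slightly more careful than the paper in two places: you explicitly justify $\mathbb{E}[\Delta^{(p)}]=0$ via antisymmetry in $i\leftrightarrow j$ (the paper uses this silently), and you track the scaling constant $c=\tfrac{1}{2(M-1)\sqrt{N+1}}$ from the definition of $\Delta_{ijk}^{(p)}$, whereas the paper's displayed expression for $\mathrm{Var}(\Delta^{(p)})$ omits the factor $c^2$; since $c^2<1$ this only strengthens the inequality, so neither discrepancy affects the validity of the argument.
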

\begin{proof} 
First of all, note that $\Delta^{(p)}$ is a $(M-1)^2(N+1)$-dimensional vector
since after the ISP procedure the records have $N+1$ entries, and each pivot is compared with the remaining $M-1$. Then, from the definition, we have
$$
\mbox{Var}(\Delta^{(p)})=\frac{1}{(M-1)^2(N+1)} \sum_{i, j \neq p}\sum_k (\hat{x}_i-\hat{x}_j)_k^2.
$$
Expanding the product, using the fact that each record is normalized and that $\cos(\theta_{ij}^{(p)})=\sum_{k} \hat {x}_{ki}\hat {x}_{kj}$ we have
$$
\mbox{Var}(\Delta^{(p)})=\frac{2}{(M-1)^2(N+1)}((M-1)^2-\sum_{i, j \neq p} \cos(\theta_{ij}^{(p)})).
$$
We can now use the Taylor expansion at zero of the cosine function, truncated at the second term, to get that for each pair $i, j\neq p$ 
$$
\cos(\theta_{ij}^{(p)})\ge 1-\frac{1}{2} (\theta_{ij}^{(p)})^2.
$$
Then, we can upper bound the variance as
\begin{equation} \label{varineq}
\mbox{Var}(\Delta^{(p)})\le \frac{1}{N+1}\mathbb{E}[(\theta^{(p)})^2].
\end{equation}
Since
$\mathbb{E}[(\theta^{(p)})^2]=\mbox{Var}(\theta^{(p)})+\mathbb{E}^2[\theta^{(p)}],$
we can replace $\mathbb{E}[(\theta^{(p)})^2]$ in~\eqref{varineq}  and we get
$$
\mbox{Var}(\Delta^{(p)})\le \frac{1}{N+1} \left(\mbox{Var}(\theta^{(p)}) +\mathbb{E}^2[\theta^{(p)}] \right).
$$   
\end{proof}

\begin{algorithm2e}[t]
\scriptsize
        \KwIn{$X$ - input data, $t$ - threshold}
        \KwOut{$O$ - set of outliers}
        \BlankLine
        $O \gets \emptyset$ \tcp*[f]{set of outliers initially empty}
        \BlankLine
        \For(\tcp*[f]{for each record}){$p\in \{ 1, 2, \ldots, M\}$}{
                $\tilde X \gets X - \ket{x_p}\bra{\bf{1}}$ \tcp*{scale the dataset w.r.t. $x_p$}
                $\hat X \gets \mathit{ISP}(\tilde X)$ \tcp*{normalize $X_p$ with ISP} 
                $v_p \gets \mathit{QVAR}(\hat X)$ \tcp*[f]{Quantum Outlier Factor of $x_p$.}
                \BlankLine
                \If(\tcp*[f]{check if $x_p$ is an outlier}){$v_p \leq t$}{
                 $O \gets O \cup \{x_p\}$
            \tcp*{add $x_p$ to the set of outliers}
            }
               
            }
        \Return $O$ \tcp*{return the list of outliers}
\caption{QODA}
\label{alg:QODAgenerale}
\end{algorithm2e}

Using vector differences instead of angles is convenient as we can maintain the superposition of all differences in the QVAR oracle and perform the variance computation directly on them within the same circuit (see Figure~\ref{fig:qvar}). 
Thus, this approach requires only the data encoding step before applying the quantum oracle QVAR and the final measurement. 
 
The complexity of the quantum oracle QVAR is, in this case, ${\rm{polylog}}\,M^2N$, i.e., ${\rm{polylog}}\,MN$, assuming available efficient methods for encoding classical data, while the number of qubits required is $7m+4n+s+1$, where $m=\ceil{\log_2 M}$, $n=\ceil{\log_2 N}$, and $s$ is the number of additional qubits of AE. 
The overall hybrid algorithm QODA to solve the Outlier Detection problem is summarized by the pseudocode in Algorithm~\ref{alg:QODAgenerale}. The complexity of QODA is $O(M^2N)$ classical operations for data preprocessing plus a $O(M(\frac{1}{\varepsilon}{{\rm{polylog}} MN}))$ quantum cost in depth.

\section{Experiments}\label{sec:experiments}
In this section, we report the experiments to evaluate the QVAR subroutine\footnote{The Qiskit implementation of the QVAR subroutine is publicly available at \url{https://github.com/AlessandroPoggiali/QVAR}} applied to both tasks proposed in the previous sections. We implemented the HQFS algorithm and the QODA using IBM's Qiskit framework\footnote{HQFS code publicly available at \url{https://github.com/AlessandroPoggiali/HQFS}}\footnote{QODA code publicly available at \url{https://github.com/AlessandroPoggiali/QODA}}. 
The main aim of these experiments is to assess the effectiveness of the quantum algorithms HQFS and QODA with respect to their classical counterparts.
Due to technological limits in assessing quantum algorithms on real quantum hardware on meaningful datasets, we conduct the experiments using the \textsc{qasm simulator} of Qiskit, which simulates quantum circuits using classical hardware. In Section~\ref{sec:experiment_hqfs}, we evaluate the performance of the HQFS algorithm on two synthetic datasets and two real datasets. In Section~\ref{sec:experiment_qoda}, we assess the capabilities of QODA in finding outliers over four synthetic datasets and three real datasets. 

\subsubsection*{Validation metrics}
For the HQFS evaluation, we are interested in comparing the ranking of the features returned by our hybrid algorithm over the classical ranking. To measure the similarity between the rankings, we use the Rank Biased Overlap (RBO) measure~\cite{webber2010similarity}, which weights the ranking similarity on the top ranks and returns a value in $[0,1]$. A high value of RBO corresponds to a high similarity on the top ranks. 

As a measure of the effectiveness of our algorithms HQFS and ML-HQFS, we consider the accuracy:
$$
\rm{ACC} =\frac{ \# TP + \# TN}{N},
$$
where $\# \rm{TP}$ and $\# \rm{TN}$ are the number of features correctly discarded/non-discarded, respectively.

Concerning Outlier Detection, for evaluating the tightness of the lower bound of Proposition~\ref{prop2}, we use the following metrics:
\begin{itemize}
    \item Mean Absolute Error (MAE):
    $$
    \rm{MAE} = \frac{1}{M} \sum_{p=0}^{M-1}| a_p-b_p|
    $$
    \item Mean Squared Error (MSE):
    $$
    \rm{MSE}= \frac{1}{M} \sum_{p=0}^{M-1}| a_p-b_p|^2
    $$
    \item Root Mean Squared Error (RMSE):
    $$
    \rm{RMSE}= \sqrt{\rm{MSE}}
    $$
\end{itemize}
where $a_p$ is the right hand side of equation~\eqref{teo}, and $b_p$ is the left hand side of the inequality.

Similarly to the Feature Selection case, we are interested in comparing the ranking of outliers provided by our hybrid algorithm against the one returned by the classical algorithm. In addition to the RBO measure, we evaluate QODA using the Precision at $n$ (P@$n$), by determining the number of records in the top $n$ ranks identified as outliers using differences and angles, respectively.

\subsection{Experiments for Feature Selection}
\label{sec:experiment_hqfs}
We first evaluate the performance of HQFS and ML-HQFS on two synthetic datasets by varying the parameter $s=O(\log\frac{1}{\varepsilon})$, the number of additional qubits required by Amplitude Estimation to get an estimate of the variance with (absolute) error $\varepsilon$ with respect
to the classical variance. Then, we test HQFS and ML-HQFS with $s=6$ on two real datasets. We assess our results by comparing the similarities of the final ranking of features of HQFS and ML-HQFS with respect to the ranking from the classical variance.

\subsubsection*{Dataset} 
For the experimental evaluation of HQFS, we consider two synthetic datasets (\texttt{synth\_1} and \texttt{synth\_2}) and the well-known real datasets \texttt{wine} and \texttt{breast}. Both synthetic datasets have $N=32$ records and $M=10$ features: 7 informative features with high variance and 3 uninformative features with low variance, which are not expected to provide useful information for the analysis. The informative features are sampled from uniform distributions in [-1,1], while the uninformative features are sampled from two normal distributions with low variance. In particular, the uninformative features for \texttt{synth\_1} are sampled from a normal distribution with a standard deviation of 0.05. In contrast, the uninformative features for \texttt{synth\_2} are sampled from a normal distribution with a standard deviation of 0.5. 

\begin{table}[ht!]
    \scriptsize
    \begin{tabular}{c|c|c|c|c}
    \hline
    \multicolumn{3}{c|}{\texttt{synth\_1}} & \multicolumn{2}{c}{\texttt{synth\_2}}  \\ \hline
        s & HQFS  & ML-HQFS & HQFS & ML-HQFS \\ \hline
        2 & 0.05 & 0.37 & 0.21 & 0.20 \\
        3 & 0.05 & 0.35  & 0.21  & 0.20 \\ 
        4 & 0.05 & 0.37  & 0.21  & 0.20 \\ 
        5 & 0.15  & 0.99  & 0.27  & 0.20 \\  
        6 & 0.43 & 0.98  & 0.31  & 0.37 \\ \hline    
    \end{tabular}
    \caption{RBO measures on synthetic datasets.}
    \label{tab:result}
\end{table}
\begin{table}[ht!]
    \scriptsize
    \begin{tabular}{c|c|c|c|c}
    \hline
    \multicolumn{3}{c|}{\texttt{synth\_1}} & \multicolumn{2}{c}{\texttt{synth\_2}}  \\ \hline
        s & HQFS  & ML-HQFS & HQFS & ML-HQFS \\ \hline
        2 & 0.3 & 1.0 & 0.3 & 1.0 \\
        3 & 0.3 & 1.0 & 0.3 & 1.0 \\ 
        4 & 0.3 & 1.0 & 0.3 & 1.0 \\ 
        5 & 0.8 & 1.0 & 0.8 & 1.0 \\  
        6 & 0.7 & 1.0 & 0.7 & 1.0 \\ \hline    
    \end{tabular}
    \caption{Accuracy measure on synthetic datasets.}
    \label{tab:result_acc}
\end{table}

\subsubsection*{Results} In Table \ref{tab:result}, we report the HQFS and ML-HQFS results on \texttt{synth\_1} and \texttt{synth\_2} datasets. To mitigate the high computational cost of simulation, we test the algorithm with a maximum value of $s=6$. However, theoretical analysis suggests that increasing the value of $s$ can yield superior results. We note that ML-HQFS performs better than HQFS for \texttt{synth\_1} datasets. However, the RBO measure is unsuitable for the \texttt{synth\_2} dataset because the variance of the uninformative features is high, and then the algorithm fails to distinguish between informative and uninformative features.  In Table~\ref{tab:result_acc}, we report the accuracy achieved by our algorithm. We see that ML-HQFS always gets an accuracy of 100\% while the accuracy of HQFS ranges from 30\% to 80\% depending on the number $s$ of additional qubits.

\begin{figure}[ht!]
    \centering
    \includegraphics[width=90mm]{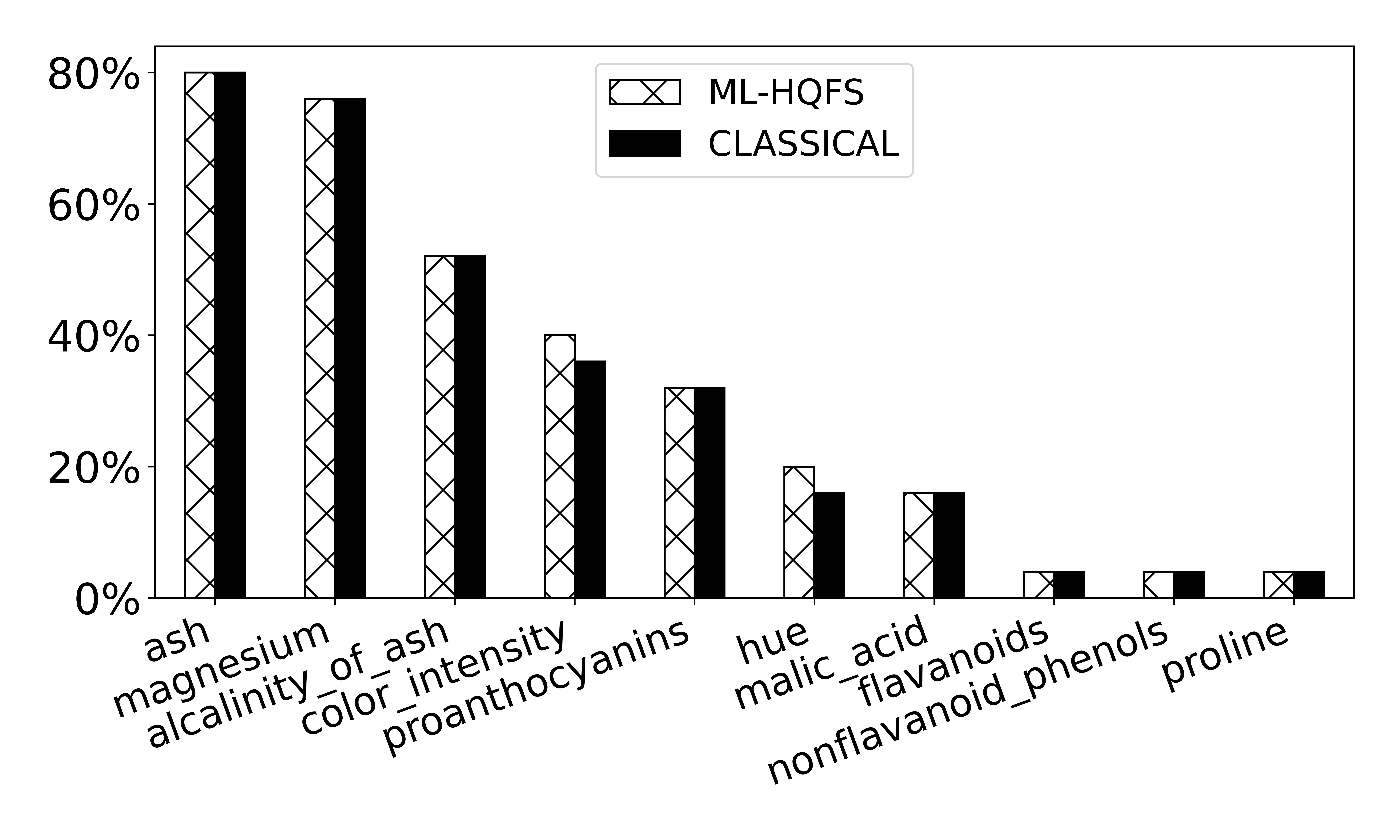}
    \caption{Features selected by ML-HQFS on \texttt{wine} dataset.}
    \label{fig:wine}
\end{figure}

\begin{figure}[ht!]
    \centering
    \includegraphics[width=90mm]{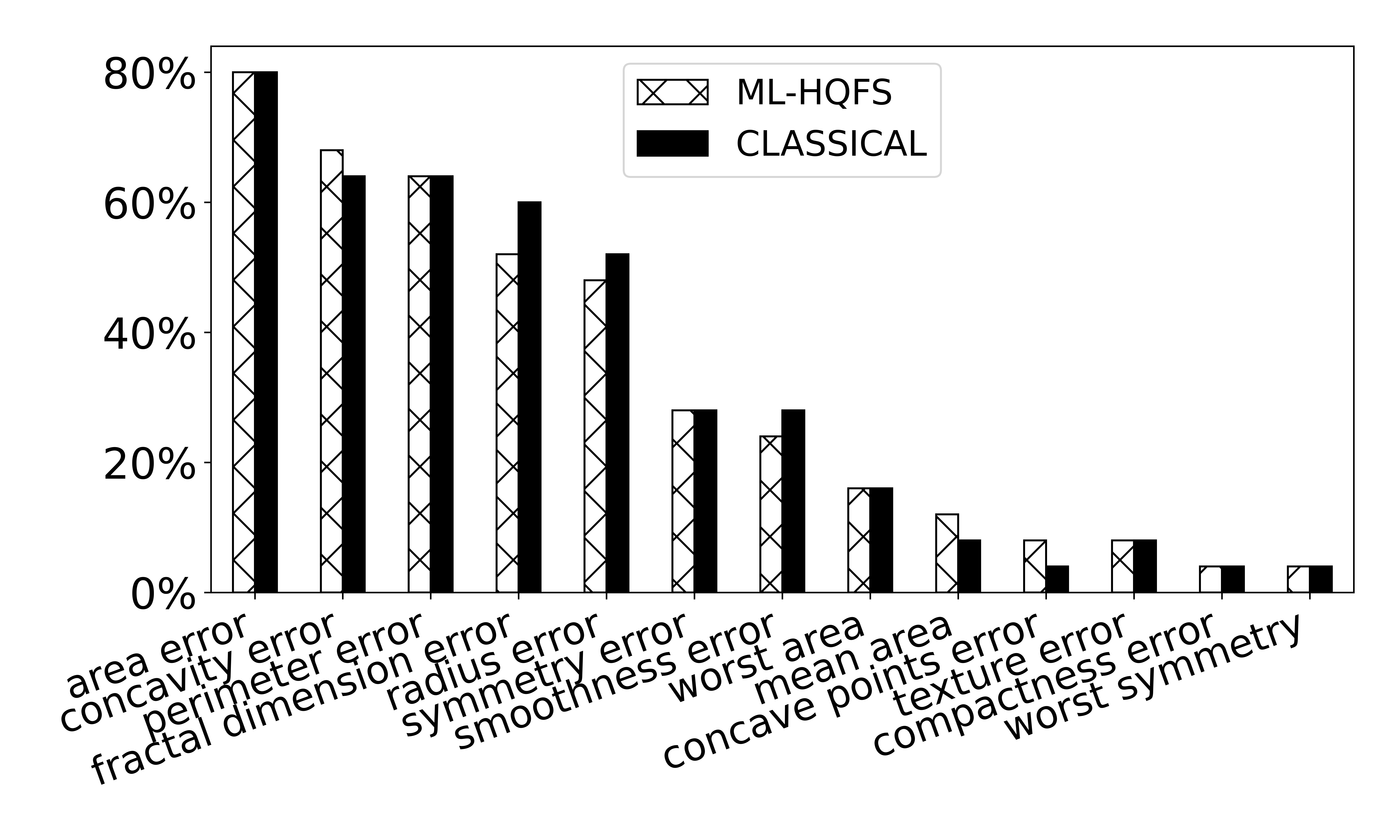}
    \caption{Features selected by ML-HQFS on \texttt{breast} dataset.}
    \label{fig:breast}
\end{figure}

For both real datasets, we randomly sample 16 records 25 times and execute the ML-HQFS algorithm with $s=6$ on these samples. We fix the threshold value as $t=0.1$ for the features of \texttt{wine} and $t=0.02$ for the features of \texttt{breast}. Then, we compare the percentage of times in which each feature is selected by the ML-HQFS to be discarded against the classical algorithm. Figure \ref{fig:wine} and Figure \ref{fig:breast}, show these comparisons for the \texttt{wine} and \texttt{breast} datasets, respectively. The figures show that both algorithms select the features with a quite similar frequency, with an MAE of 0.8\% for the \texttt{wine} dataset and an MAE of 2.15\% for \texttt{breast}.

\subsection{Experiments for Outlier Detection}
\label{sec:experiment_qoda}
We now present and discuss the numerical experiments conducted to evaluate the effectiveness of QODA on a series of benchmark datasets. The initial assessment involves testing the effectiveness of the heuristic used by QODA. The second experimental evaluation aims at assessing the accuracy of QODA using the \textsc{qasm simulator} of Qiskit. 

\subsubsection*{Dataset}
To evaluate the heuristic, we use four synthetic datasets generated with the PyOD library\footnote{\url{https://pyod.readthedocs.io/en/latest/index.html}} (\texttt{synth\_a}, \texttt{synth\_b}, \texttt{synth\_c}, and \texttt{synth\_d}), where normal data is generated by a multivariate Gaussian distribution and outliers are generated by uniform distribution, and three real datasets\footnote{\url{https://odds.cs.stonybrook.edu}} (\texttt{lympho}, \texttt{glass}, and \texttt{wbc}). They are defined as follows:

{\small
\begin{itemize}
\item \texttt{synth\_a}: M=500 synthetic data with dimension N=20 whose 2\% are outliers.
    \item \texttt{synth\_b}: M=500 synthetic data with dimension N=30 whose 2\% are outliers.
  
    \item \texttt{synth\_c}: M=500 synthetic data with dimension N=20 whose 10\% are outliers.
    \item \texttt{synth\_d}: M=500 synthetic data with dimension N=30 whose 10\% are outliers.
      \item \texttt{lympho}: real dataset with M=148, N=18.
    \item \texttt{glass}: real dataset with M=214, N=9.
    \item \texttt{wbc}: real dataset with M=278, N=30.
\end{itemize}}

\subsubsection*{Results}
We run two different kinds of experiments. The first kind is conducted to evaluate the effectiveness of the lower bound given by Proposition~\ref{prop2}. In Table~\ref{tab:errors}, we present the approximation errors introduced by the bound given in~\eqref{varineq}, measured in terms of Mean Squared Error (MSE), Mean Absolute Error (MAE), and Root Mean Squared Error (RMSE). Our findings indicate that the lower bound of Proposition~\ref{prop2} is very tight for all datasets considered.

\begin{table}[ht!]
\scriptsize
    \centering
    \begin{tabular}{l|ccc}
    \hline
        \textbf{dataset} & \textbf{MSE} & \textbf{MAE} & \textbf{RMSE} \\ \hline
        synth\_a & 3.02e-8 & 1.45e-4 & 1.73e-4 \\
        synth\_b & 1.92e-9 & 3.85e-5 &  4.38e-5 \\ 
        synth\_c & 2.53e-8 & 1.30e-4 & 1.60e-4 \\
        synth\_d & 1.57e-9 & 3.37e-5 & 3.96e-5 \\ \hline
        lympho & 3.83e-7 & 5.27e-4 & 6.19e-4 \\
        glass & 7.51-3 & 5.98e-2 & 8.66e-2 \\ 
        wbc & 3.09e-5 & 5.21e-3 & 5.55e-3 \\ \hline
    \end{tabular}
    \caption{Approximation errors of relation~\eqref{teo}}
    \label{tab:errors}
\end{table}

The second kind of experiment aims to evaluate the effectiveness of Outlier Detection leveraging the variance of the differences $\Delta^{(p)}_{ij}$ instead of the variance of the angles $\theta_{ij}^{(p)}$. We compare the outlier rankings generated by the two methods to achieve this objective. Table~\ref{tab:p-at-n} presents the P@$n$ metric, which we used to conduct this comparison. We can see that the precision is very high, especially for synthetic datasets. Furthermore, we utilized the RBO measure to evaluate the overall outlier ranking of our heuristic, taking as ground truth the classical algorithm working with the angles. The table indicates that the outlier rankings obtained by the method leveraging the variance of differences are similar to the ground truth, particularly for the highest-ranked outliers.

\begin{table}[ht!]
\scriptsize
    \centering
    \begin{tabular}{l|cccccc} \hline
       \textbf{dataset} & \textbf{n=5} & \textbf{n=10} & \textbf{n=15} & \textbf{n=20} & \textbf{n=25} & \textbf{n=30} \\ \hline
        synth\_a & 1.0 & 1.0 & 1.0 & 0.90 & 0.84 & 0.93 \\
        synth\_b & 1.0 & 1.0 & 0.93 & 0.95 & 0.84 & 0.90 \\ 
        synth\_c & 0.80 & 0.90 & 0.93 & 0.90 & 0.96 & 0.97 \\
        synth\_d & 1.0 & 1.0 & 1.0 & 0.95 & 0.96 & 0.93 \\ \hline
        lympho & 0.80 & 0.60 & 0.67 & 0.70 & 0.76 & 0.77 \\
        glass & 0.70 & 0.80 & 0.80 & 0.90 & 0.84 & 0.87 \\ 
        wbc & 0.60 & 0.70 & 0.80 & 0.70 & 0.60 & 0.57 \\ \hline
    \end{tabular}
    \caption{{\rm Precision-at-$n$ (P@$n$) of the algorithm taking as ground truth the classical algorithm working with the angles.}}
    \label{tab:p-at-n}
\end{table}

\begin{table}[ht!]
\scriptsize
    \centering
    \begin{tabular}{l|cccccc} \hline
        \textbf{dataset} & \textbf{p=0.70} & \textbf{p=0.75} & \textbf{p=0.80} & \textbf{p=0.85} & \textbf{p=0.90} & \textbf{p=0.95} \\ \hline
        synth\_a & 0.88 & 0.89 & 0.89 & 0.90 & 0.90 & 0.91 \\
        synth\_b & 0.65 & 0.69 & 0.74 & 0.79 & 0.84 & 0.88 \\ 
        synth\_c & 0.65 & 0.69 & 0.74 & 0.78 & 0.83 & 0.88 \\
        synth\_d & 0.67 & 0.71 & 0.76 & 0.81 & 0.86 & 0.91  \\ \hline
        lympho & 0.52 & 0.54 & 0.57 & 0.59 & 0.62 & 0.66  \\
        glass & 0.51 & 0.55 & 0.59 & 0.63 & 0.68 & 0.74  \\ 
        wbc & 0.60 & 0.63 & 0.66 & 0.68 & 0.68 & 0.67  \\ \hline
    \end{tabular}
    \caption{RBO measure varying the parameter $p$}
    \label{tab:rbo2}
\end{table}

Currently, the availability of quantum hardware is limited, which means that quantum computation can only be simulated using classical hardware. Nevertheless, simulating quantum computation on classical hardware is expensive, so we can only execute basic versions of our algorithm. For this reason, we run QODA on two simple synthetic datasets with a single outlier each. The first has $M=5$ records with dimension $N=1$, and the second has $M=5$ records with dimension $N=3$. 
By running the experiments with different random seed values for generating datasets five times, we found that QODA successfully identified the sole outlier in both datasets with a 100\% accuracy rate.

\section{Conclusion}\label{sec:conclusion}
In this study, we proposed a novel quantum algorithm called QVAR to estimate the variance within a set of values while in a superposition state. We utilized the QVAR subroutine in two prominent and crucial areas within data analysis: Feature Selection and Outlier Detection. This was accomplished by creating two hybrid approaches to tackle these tasks: the Hybrid Quantum Feature Selection (HQFS) algorithm and the Quantum Outlier Detection Algorithm (QODA). 

Our experiments revealed that when the extra qubits are chosen properly, QVAR effectively computes an estimation of the classical variance. Consequently, concerning the Feature Selection task, the feature rankings generated by HQFS closely resemble those derived from classical algorithms, particularly for low-variance features. This suggests that HQFS correctly discards uninformative features, showcasing its accuracy in feature selection. The experiments for QODA indicate that the heuristic we proposed yields results comparable to those of the classical ABOD algorithm. Overall, our hybrid algorithm offers a promising approach for outlier detection tasks.

The algorithms we introduced are intended to process classical data as input. Nonetheless, the QVAR subroutine holds potential for broader application in scenarios involving an initial state of quantum values in superposition. We consider the prospect of leveraging the QVAR subroutine for other tasks as a direction for future research.

\backmatter

\subsection*{Acknowledgement}
This study was carried out within the National Centre on HPC, Big Data and Quantum Computing - SPOKE 10 (Quantum Computing) and received funding from the European Union Next-GenerationEU - National Recovery and Resilience Plan (NRRP) - MISSION 4 COMPONENT 2, INVESTMENT N. 1.4 - CUP N. I53C22000690001. This manuscript reflects only the authors' views and opinions, neither the European Union nor the European Commission can be considered responsible for them. Partial support was also provided by INdAM-GNCS.
This work is also supported by the European Community under the Horizon~2020 programme G.A. 871042 \emph{SoBigData++} and by the NextGenerationEU programme under the funding scheme ``\textit{SoBigData.it} - Strengthening the Italian RI for Social Mining and Big Data Analytics'' - Prot. IR0000013.

\bibliography{sn-bibliography}

\end{document}